\newcolumntype{P}[1]{>{\centering\arraybackslash}p{#1}}
\newcolumntype{M}[1]{>{\centering\arraybackslash}m{#1}}
\pretocmd{\NAT@citex}{%
	\let\NAT@hyper@\NAT@hyper@citex
	\def\NAT@postnote{#2}%
	\setcounter{NAT@total@cites}{0}%
	\setcounter{NAT@count@cites}{0}%
	\forcsvlist{\stepcounter{NAT@total@cites}\@gobble}{#3}}{}{}
\newcounter{NAT@total@cites}
\newcounter{NAT@count@cites}
\def\NAT@postnote{}
\def\NAT@hyper@citex#1{%
	\stepcounter{NAT@count@cites}%
	\hyper@natlinkstart{\@citeb\@extra@b@citeb}#1%
	\ifnumequal{\value{NAT@count@cites}}{\value{NAT@total@cites}}
	{\ifNAT@swa\else\if*\NAT@postnote*\else%
		\NAT@cmt\NAT@postnote\global\def\NAT@postnote{}\fi\fi}{}%
	\ifNAT@swa\else\if\relax\NAT@date\relax
	\else\NAT@@close\global\let\NAT@nm\@empty\fi\fi
	\hyper@natlinkend}
\renewcommand\hyper@natlinkbreak[2]{#1}
\algrenewcommand\ALG@beginalgorithmic{\small}
\definecolor{reviewred}{RGB}{151, 31, 54}
\newcommand{\reviewtimetoday}[2]{

\reviewtimetoday{\today}{Draft Version v.0.1}

\usepackage{algorithm}
\usepackage{algpseudocode}
\usepackage{multirow}
\usepackage{booktabs}
\usepackage{amsmath,amsfonts, amsthm}
\usepackage{scalefnt}

\newtheorem{theorem}{Theorem}

\newtheorem{proposition}[theorem]{Proposition}
\newtheorem{remark}[theorem]{Remark}

\title{A Linear Programming Approach to Estimate the Core in Cooperative Games\thanks{%
		This research has been partially supported by Grant PGC2018-097960-B-C21, PID2021-12403030NB-C31 and PID2022-137211NB-I00
		from MICINN, Spain, and ERDF, "A way to make Europe", European Union, and
		Grant CIPROM/2024/34 from Generalitat Valenciana, Spain.
}}

\author{J. Camacho \thanks{
		Center of Operations Research, Miguel Hern\'{a}ndez University of Elche,
		03202 Elche (Alicante), Spain (j.camacho@umh.es, jgoncalves@umh.es, joaquin@umh.es).} \and J.C. Gonçalves-Dosantos \footnotemark[2] \thanks{Corresponding author.} \and J. Sánchez-Soriano \footnotemark[2]}
\date{}

\begin{document}
	\maketitle
	\begin{abstract}
		This paper proposes a novel algorithm to approximate the core of transferable utility (TU) cooperative games via linear programming. Given the computational hardness of determining the full core, our approach provides a tractable approximation by sampling extreme points through randomized linear problems (LPs). We analyze its convergence and computational complexity, and validate its effectiveness through extensive simulations on various game models. Our results show that the method is scalable and achieves high accuracy in terms of core reconstruction.

		\textbf{Key words.} TU games; core; estimation; linear programming; polynomial time.\newline 
		
		\bigskip
		
		\noindent \textbf{Mathematics Subject Classification: } 90C05, 91A12, 91A68 
	\end{abstract}
	
	\section{Introduction}
	
	In the context of decision problems, when multiple agents interact, one of the possible alternatives is for them to cooperate to improve the collective outcome, but this in turn entails the need to obtain a fair distribution of the gains obtained from collaboration. For these types of situations, cooperative game theory provides an ideal framework for modeling these problems and providing solutions that satisfy certain properties related to principles of fairness and stability. A basic stability principle is that no agent or group of agents can obtain a better outcome by departing from cooperation with the rest of the agents and collaborating only among themselves. This principle is called \emph{coalitional rationality} and appears to be a minimum requirement for collaboration among all agents to be effective and consolidated; otherwise, there would always be a group of agents with incentives to break the collaboration and form their own coalition. This stability principle was established as a solution for cooperative games by \cite{gillies1959core}, who defines the \emph{core} of a cooperative game as the set of all payoff distributions that satisfy the requirement that there is no proper group of agents who can get more for themselves than what they are jointly allocated in each of those distributions. Therefore, the core of a cooperative game brings together the set of all stable profit-sharing agreements, in the sense explained above. Note that a core allocation, when it exists, guarantees that no group of agents has an incentive to break away, making it a natural generalization of strong Nash equilibrium to coalition formation settings \citep{Nash1950,Aumann1959}.
	
	In domains where groups of agents cooperate to generate joint benefits—ranging from classic cost‐sharing and resource‐allocation problems in operations research to modern multi-agent systems in economics and artificial intelligence (see, for example, \cite{peleg2007introduction,sandholm1999distributed})— stability principles are crucial to achieving cooperation agreements. For this reason the core has been extensively studied in the literature. Thus, core-based analyzes are prevalent in supply chains under decentralized control (\cite{guardiola2007core}), transportation problems (\cite{sanchez2001core} and \cite{aparicio2025game}), sequencing problems (\cite{schouten2021core}), operations management contexts \cite{luo2022recent}, and data envelopment analysis applications with integrated fuzzy clustering (\cite{omrani2018integrated}), among others.
	
	For this reason, it is important to study the core of cooperative games to understand the outcomes that are possible to implement and to serve as a basis for analyzing solutions in different cooperative scenarios. However, despite its attractiveness and interest, the core can be extremely difficult to compute or even characterize for general transferable utility (TU) games. Its own definition implies a payoff constraint for each of the $2^n-1$ nonempty coalitions, where $n$ is the number of agents involved, so the exact computation is equivalent to solving a linear program with an exponential number of inequalities \citep{elkind2009computational}. Determining whether the core is nonempty or even finding a simple core allocation is NP-hard for most cooperative games. For example, \cite{conitzer2006complexity} prove this for compact games. \cite{Okamoto2004} proved that deciding core nonemptiness in a traveling salesman game is NP-hard. Similarly, \cite{ChenZhang2009} show that checking whether a given allocation lies in the core of an inventory centralization game is NP-hard. In cooperative scheduling games with supermodular costs, the core is often empty and even computing the least-core (minimum relaxation for stability) is strongly NP-hard \cite{SchulzUhan2010}. But even verifying that a given allocation belongs to the core is coNP-complete \citep{deng1994complexity}. For the case of two-echelon models, the latter is studied in \cite{SSLL2019}. These complexity barriers, now well documented in the algorithmic game theory literature \citep{chalkiadakis2011computational}, are common not only for the core but also for many well-known solutions in cooperative game theory, such as the well-known Shapley value \citep{shapley1953value}.
	
	Because exact computation of the core is intractable in almost all large games recent work has shifted to approximation. Following the ideas in \cite{castro2009polynomial,CASTRO2017180} and \cite{maleki2013bounding} for the Shapley value, one approach adopts \emph{sampling} techniques: the \emph{probable-approximate core} of \citet{yan2022probable} relaxes a small fraction of coalition constraints via Monte Carlo sampling, while the iterative constraint-sampling method of \citet{gemp2024approximate} progressively tightens an initial payoff by enforcing randomly drawn inequalities. Pushing the idea further, \citet{saavedra2024ontheestimation} reinterpret core computation as a statistical set-estimation problem: they draw approximately uniform \emph{hit-and-run} samples from the core and take their convex hull, which converges (in Hausdorff distance) to the true core in polynomial time.
	
	All these difficulties that have been discussed above connect with the concept of \emph{bounded rationality} in decision problems. For example, let us consider a procedure to select an allocation of the gains obtained from cooperation, as mentioned before, then a solution should meet the criterion of stability of belonging to the core, if this is nonempty. Therefore, in this sense, the core would play the role of a necessary condition for an allocation proposal to be accepted, but not a sufficient condition. This means that any allocation which is finally agreed meets the criterion, but not any allocation satisfying the criterion must necessarily be accepted. However, in a negotiation procedure of succesive proposals of the allocation of the total gains may be extremely difficult for the agents to check whether a proposed allocation is stable in the sense of the core, in order to accept it or not. In this sense, \cite{Simon1950,Simon1972} introduced the concept of ``bounded rationality” and pointed out that real-life agents may not spend an unbounded amount of resources to evaluate all the possibilities for optimal outcome. That is, in economic theory, ``unbounded rationality” assumes that players costlessly calculate all possible strategies or alternatives and choose the optimal one. However, this assumption is unrealistic, and agents actually have finite information, computational power, and time, forcing them to use heuristics or ``satisfactory” solutions rather than optimal ones. In the context of cooperative games, the bounded rationality approach can be applied to the determination of the characteristic function, to the number of coalitions to be explore or to the properties to be satisfied by exact solutions. In the particular case of the core, the three possibilities can be considered. For example, for the case of transportation games, \cite{SSLL2019} analyze these three alternatives of bounded rationality for the case of the core of those games.
	
	The contribution of this work is framed within the approach of obtaining good approximations of the core in a reasonable amount of time. Therefore, it aims to address both the theoretical problem of approximately determining the core and to propose a bounded rationality approach in one of the three aspects mentioned above for the case of cooperative games. Obviously, both are related, and the proposed solution for approximately determining the core serves as a basis for constructing the bounded rationality proposal using the core. The idea for approximately determining the core is based on its structure, close to a linear optimization problem. This fact was already exploited by Bondareva and Shapley \citep{bondareva1963some,shapley1967multilinear} to obtain a characterization of the games that have a nonempty core. Specifically, we propose a deterministic alternative that reveals the core geometry: an iterative linear programming (LP) algorithm that solves successive LPs with strategically selected objective vectors. For each LP, an extreme point of the core is obtained, and the convex hull of the collected extreme points approximates the entire core, while solving each LP in polynomial time with respect to the number of agents. Unlike the purely point-sampling approaches mentioned above, our extreme point exploration provides a richer representation of the core geometry without relaxing or dispensing with any of the constraints that define the core; consequently, every element of the estimated set actually belongs to the core. Therefore, the approximated core we introduce is actually a subset of the core of the game. Furthermore, this approximation of the core is useful for checking whether pointwise solutions such as the Shapley value \citep{shapley1953value} or the Tijs value \cite{Tijs1981} belong to the core and are therefore stable in that sense. Once the approximation of the core is obtained, it is not computationally expensive to check whether an allocation is in the core. In addition, as a by-product, estimates of different point-wise solutions related to the core could be obtained, such as the core-centroid of the extreme points of the core or the core-center \citep{Gonzalez2007}.

	In order to evaluate the algorithm, we consider a diverse suite of cooperative games, demonstrating scalability to many players and complex value structures. To quantify approximation quality we introduce two metrics:  the \emph{Extreme‐Point Ratio (EPR)} and the \emph{Volume Ratio (VR)}. The EPR measures the proportion of true extreme points of the core obtained by the approximated core. And, the VR measures the proportion of the total volume of the core obtained by the approximated core. We observe high scores of both metrics across all experiments, indicating that the produced polytope closely expand the true core. Against existing approximation methods, we achieve consistent gains in both running time and accuracy, establishing the repeated-LP algorithm as a state-of-the-art tool for core approximation.
	
	Finally, the proposed approximate core solution serves as a bounded rationality solution in one of the aspects mentioned: the properties to be satisfied by exact solutions. As explained previously, when the core is non-empty, the stability property in the sense of the core of a solution can be a necessary condition for an allocation to be accepted and serve to reach a cooperative agreement between the agents involved in a problem. However, determining the core or verifying that an allocation is in the core may be completely time-intensive; therefore, a relaxation of this stability property may be to use the proposed approximate core to verify whether the allocation can be considered stable with a reasonable and acceptable level of security measured with any of the metrics mentioned above. From a practical point of view, this is entirely feasible with the determination of the approximate core proposed in this paper, since once obtained, checking whether an allocation is in the core of the game can be solved in polynomial time. Furthermore, given the construction of our approximate core, if it is concluded that an allocation is stable, then it really is, but it may happen that an allocation is concluded to be unstable when it really is. However, if the metrics are sufficiently good, these allocations that are actually in the core and are rejected will be close to the core boundary, while those that are not rejected will be located closer to the center of the core, which is, in principle, better from a stability perspective. Nevertheless, it should be noted that this work does not address the complexity of determining the characteristic function of the game, which can also be a computationally complex problem.

	The remainder of the paper is organized as follows. Section \ref{sec:preli} presents the notation on cooperative games used throughout the paper and reviews the main concepts related to the core of cooperative games. Section \ref{sec:core.esti} introduces our iterative linear programming algorithm to approximate the core of a game, analyzes its computational complexity, and defines the metrics used to measure the accuracy of the results. Section \ref{sec:simulations} reports the simulation results for three benchmark families of cooperative games, comparing deterministic and random objective schemes. Finally, Section \ref{sec:conc} concludes.
	
	
		
		

	\section{Cooperative games and the core}\label{sec:preli}
	This section introduces the main definitions and notation on cooperative games used throughout the paper. To begin with, a pair $(N, v)$ is said to be a \emph{transferable utility cooperative game} (TU-game, for short) where $N$ is a finite set of agents with $N = \{1, \ldots, n\}$, and $v$ is the \emph{characteristic function} of the game, such that $v: 2^N \to \mathbb{R}$, with $v(\varnothing) = 0$, and for each $T \subseteq N = \{1, \ldots, n\}$, $v(T)$ represents the gains that agents in coalition $T$ can obtain by themselves regardless of what the rest of the agents do. The class of all TU-games with set of players $N$ is denoted by $G^N$, and the set of all TU-games by $G$.
	
	Since cooperation generates a collective value, the central question in TU-games is how to allocate the total worth of the grand coalition, $v(N)$, among the individual players. The challenge lies in distributing this value in a way that reflects each player’s contribution to the different coalitions, while also satisfying desirable properties such as efficiency (the total value $v(N)$ is fully allocated), stability (no subset of players can improve its allocation by forming its own coalition apart from the rest of the agents), and other fairness criteria.
	
	An \emph{allocation} for a TU-game $(N,v)$ is a vector $x = (x_1, \ldots, x_n) \in \mathbb{R}^n$ that assigns to each player $i \in N$ a payoff $x_i$. An allocation is said to be \emph{efficient} if the total payoff equals the value of the grand coalition, i.e., $\sum_{i \in N} x_i = v(N)$. On the one hand, a \emph{value or pointwise solution} for TU-games is a function $\phi:G^N\rightarrow \mathbb{R}^{|N|}$ that assigns to each game $(N,v)$ an allocation $\phi(N,v) \in \mathbb{R}^{|N|}$. Well-known values in cooperative games are the \emph{Shapley value} \citep{shapley1953value} that is a single-valued solution concept that assigns to each player $i\in N$ the player's expected marginal contribution, averaged over all possible orders in which the grand coalition can be formed; and the \emph{nucleolus} \citep{schmeidler1969nucleolus} that selects the allocation that lexicographically minimizes coalition excesses, thereby reducing the greatest, and subsequently the next greatest, dissatisfaction among coalitions. On the other hand, a \emph{set-valued solution} is a mapping $\Phi:G^N\rightarrow \mathbb{R}^{|N|}$ that assigns to each game $(N,v)$ a set of allocations $\Phi(N,v) \in \mathbb{R}^{|N|}$. A prominent set-valued solution for cooperative games is the core \citep{gillies1959core} that consists of all allocations which cannot be rejected by any coalition. Note that while the nucleolus always belongs to the core when the latter is nonempty, the Shapley does not.
	
	As mentioned above, one of the desirable properties that an allocation satisfies is stability. In cooperative games, this concept refers to the possibility that an agent or group of agents may be able to obtain a better result than the proposed allocation for themselves and then break up the grand coalition. In this sense, three levels of stability can be established, which will give rise to three set-valued solutions. The first would be given by the efficiency property, which means that the total worth of the grand coalition is fully distributed among the agents. This principle is connected to the concept of Pareto stability, which establishes that no agent can be improved without worse off another. The cooperative solution concept associated with this principle is the \emph{preimputation set} of a cooperative game, which is defined for each $(N,v) \in G^N$ as follows:
	\begin{equation*}
		PI(N,v)=\left\{x \in \mathbb{R}^n:\sum_{i=1}^{n}x_i=v(n)\right\}
	\end{equation*}
	The second level would be that of \emph{individual rationality}, which establishes that in an allocation, no agent should receive less than they can achieve on their own. This principle is related to the concept of Nash equilibrium \citep{Nash1950} in non cooperative game theory, which says that a strategy profile is (individually) stable if there is no agent with incentives to deviate unilaterally. The cooperative solution concept associated with this principle is the \emph{imputation set}, which is defined for each $(N,v) \in G^N$ as follows:
	\[
	\text{I}(N,v) = \left\{ x \in \mathbb{R}^n : \sum_{i \in N} x_i = v(N), \text{ and } x_i \geq v(i) \text{ for all } i \in N \right\}.
	\]
	Finally, the third level of stability would be that of \emph{coalitional rationality}, which establishes that in an allocation, no group of agents should jointly receive less than they can achieve on their own. This principle is related to the concept of strong Nash equilibrium \citep{Aumann1959} in non cooperative games, which says that a strategy profile is (coalitionally) stable if there is no group of agents with incentives to deviate unilaterally. The cooperative solution concept associated with this principle is the \emph{core}, which is defined for each $(N,v) \in G^N$ as follows:
	\[
	\text{Core}(N,v) = \left\{ x \in \mathbb{R}^n : \sum_{i \in N} x_i = v(N), \text{ and } \sum_{i \in T} x_i \geq v(T) \text{ for all } T \subseteq N \right\}.
	\]
	
	The preimputation set is always nonempty; the imputation set is nonempty iff $\sum_{i\in N} v(i) \leq v(N)$; and the core is nonempty iff the game is balanced \citep{bondareva1963some,shapley1967multilinear}. Therefore, the core can be empty and in those cases it is not possible to obtain stable allocations. However, when the core is nonempty it seems reasonable to look for stable allocations.
	
	Since the core is a convex and compact polyhedron, it can be described as the convex hull of its extreme points, denoted by $E(N,v)$. Specifically, assuming non-emptiness, there exists a finite set $\{p_1, \ldots, p_m\} := E(N,v)$ such that $C(N,v) = H(E(N, v))$, where $H(\cdot)$ denotes the convex hull operator. Associated with this property, a pointwise solution or value can be defined, the \emph{core-centroid}. This solution is simply the center of mass of the core, that is, the midpoint of all the extreme points of the core, capturing the average location in the allocation space. Formally, the core-centroid is defined for each $(N,v) \in G^N$as follows:
	\begin{equation*}
		CC(N,v)=\frac{1}{|E(N,v)|}\sum_{j \in E(N,v)}p_j.
	\end{equation*}
	
	Other point-wise solution concepts related to the core are the nucleolus which is the lexicographic center of the core \citep{Maschler1979} and the core-center \citep{Gonzalez2007} which is the center of gravity of the core when the uniform distribution is defined over the core of the game.

	While conceptually attractive, the core presents significant computational challenges. In general, checking whether the core is non-empty is an NP-complete problem \citep{conitzer2006complexity}. Furthermore, computing a single core element may require solving a linear program with exponentially many constraints (one per coalition, i.e., $2^{|N|} - 1$ in total). Even checking whether a given allocation is in the core can take considerable computational effort \citep{deng1994complexity}. Nevertheless, for specific classes of games such as convex games or assignment games, the core has favorable structural properties that permit an efficient calculation \citep{shapley1971cores,shapley1971assignment}.
	
	In this work, we focus on approximating the core rather than computing it exactly. Our method builds upon the fact that the core can be approached via its extreme points, which are themselves solutions to linear programs derived from varying linear objectives. This motivates a sampling-based strategy, which is presented in the next section. Furthermore, once the approximation of the core is obtained, estimates of other related solutions such as the core-centroid or the core-center can be obtained. This is not the case for the nucleolus, as its definition does not explicitly involve the extreme points or shape of the core. Finally, this approximation of the core can be related to possible solutions based on bounded rationality in the sense proposed in \cite{Simon1950} and \cite{Simon1972}, highlighting that in many real-world scenarios agents face cognitive and computational limits, so approximate stability may be more realistic than exact core allocations. 
	
	\section{Core estimation}\label{sec:core.esti}
	
	The core of a TU-game provides a comprehensive description of stable outcomes, but its computation is often hindered by the exponential number of coalitional constraints. While exact methods exist for specific subclasses of games, they are typically infeasible in general settings, particularly for large player sets or when the game's structure is only implicitly defined. To address this limitation, we propose a randomized LP-based approximation algorithm that exploits the polyhedral nature of the core. The idea is to sample directions in the payoff space and identify extreme points of the feasible region corresponding to the core. By collecting and convexifying these points, we obtain a tractable approximation that converges to the true core as the number of samples increases. The following subsection details the algorithm and establishes its theoretical basis.

	\subsection{LP-Algorithm for estimating the core}
	
	The proposed Algorithm~\ref{Alg1} estimates the core $C(N, v)$ through repeated sampling and linear optimization. Given a cooperative game $(N, v)$ and a sampling parameter $k \geq 1$, the algorithm generates $k$ random linear objectives, each defined by a non-zero vector $c \in \mathbb{R}^{|N|}$. For each such vector, it solves a linear program that maximizes $c^\top x$ over the set of allocations satisfying efficiency and coalitional rationality. The solution $\overline{x}$ of each linear program corresponds to a vertex (extreme point) of the feasible region and, under the assumption of a non-empty core, belongs to $C(N,v)$. These solutions are collected in the set $\widehat{E}(N,v,k)$ of sampled extreme points. After $k$ iterations, the algorithm returns the convex hull of these allocations, denoted by $\widehat{C}(N,v,k) = H(\widehat{E}(N,v,k))$, as an approximation of the core.
	
	This procedure is motivated by the polyhedral structure of the core, which can be described entirely by its extreme points, and corresponds to the feasible region of a linear program. With these ideas in mind, by sampling directions uniformly or in a structured manner and solving the corresponding LP problem, the algorithm aims to approximate the geometry of the core with increasing precision as $k$ grows, while ensuring that all points remain within the core. The pseudocode of the algorithm is given below.
	
	\begin{algorithm}[!ht]
		\caption{Pseudocode of the algorithm for reconstructing $ C(N, v) $.} \label{Alg1}
		\begin{algorithmic}
			\State \textbf{Inputs:} $ N, v $ and a fixed sampling size $ k \geq 1 $.
			\Statex \hrulefill
			\State \textbf{Step 1:} Set $ l = 1 $ and initialize $\widehat{E}(N,v,k)=\emptyset$.
			\While{$ l \leq k $}
			\State \quad Generate a random vector $c\in\mathbb{R}^{|N|}\setminus \{0\}$.
			\State \quad Solve the LP (with solution $ \overline{x} $)
			\begin{align*}
				\max\; & c^\top x\\
				\text{s.t.} \quad & \sum_{i\in N} x_{i} = v(N),\\
				& \sum_{i\in T} x_{i} \geq v(T), \quad \forall\; T \subset N
			\end{align*}
			\State \quad Update $\widehat{E}(N,v,k)=\widehat{E}(N,v,k)\cup \{\overline{x}\}$.
			\State \quad Set $ l = l + 1 $.
			\EndWhile
			\Statex \hrulefill
			\State \textbf{Step 2:} Compute $ H(\widehat{E}(N,v,k)) $.
			\Statex \hrulefill
			\State \textbf{Output:} Return $\widehat{C}(N,v,k)=H(\widehat{E}(N,v,k))$.
		\end{algorithmic}
	\end{algorithm}
	
	\begin{remark} \label{rem conv complexity} 
		First, note that the convex hull of $m$ points in $\mathbb{R}^d$ can be deterministically computed in $\mathcal{O}\left(m\log m+m^{\lfloor d/2 \rfloor}\right)$ \citep{chazelle1993optimal}. Second, Step 2 is necessary in all cases, even when the full set $E(N,v)$ is known, since the core is defined via its convex representation. As a consequence, for games where $E(N,v)$ is large, this operation becomes computationally demanding. Therefore, if $\widehat{E}(N,v,k)$ significantly reduces the number of extreme points (i.e., $|E(N,v)| \gg |\widehat{E}(N,v,k)|$), Step 2 benefits both runtime and memory usage. This effect becomes particularly relevant in higher-dimensional models, such as those presented in Section~\ref{sec:simulations}.
	\end{remark}
	
	As a stability-based solution concept, it is crucial that the estimated set $\widehat{C}(N,v,k)$ is a true subset of the core. Otherwise, some allocations that are not really stable would possibly be considered as such. The following result ensures this inclusion, and also justifies the robustness  of the algorithm regardless of the possible emptiness of the core.
	
	\begin{proposition}
		Algorithm \ref{Alg1} is well defined.
	\end{proposition}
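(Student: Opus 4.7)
The plan is to verify two things that together constitute ``well-definedness'' of Algorithm~\ref{Alg1}: (i) every LP invoked in the main loop either is detectably infeasible or admits an optimum attained at a vertex of the feasible polytope, so the update $\widehat{E}(N,v,k) \cup \{\overline{x}\}$ always makes sense; and (ii) the returned set $\widehat{C}(N,v,k)$ is genuinely contained in $C(N,v)$ when the latter is non-empty, so that the approximation is sound as a stability-based concept.

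First I would observe that the feasible region of the LP is precisely the core $C(N,v)$ of the game. Hence if the game is not balanced, the LP is infeasible for \emph{every} random objective $c$; any standard LP solver detects this in the first iteration, so the algorithm terminates with $\widehat{E}(N,v,k)=\emptyset$ and correspondingly $\widehat{C}(N,v,k)=\emptyset$, which is consistent with $C(N,v)=\emptyset$.

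When the core is non-empty, I would argue it is a non-empty compact polytope. Closedness is immediate since it is the intersection of finitely many closed half-spaces with the efficiency hyperplane. For boundedness, applying coalitional rationality to singleton coalitions yields $x_i \ge v(\{i\})$ for every $i\in N$, and combining this with efficiency $\sum_{i\in N} x_i = v(N)$ gives the upper bound $x_i \le v(N)-\sum_{j\ne i} v(\{j\})$ for each coordinate; hence $C(N,v)$ sits inside a compact box. By the fundamental theorem of linear programming, any linear objective $c^\top x$ with $c\in\mathbb{R}^{|N|}\setminus\{0\}$ over this non-empty compact polytope attains its maximum at some extreme point, so the solver returns $\overline{x}\in E(N,v)\subseteq C(N,v)$ at each iteration, independently of the particular random direction $c$ that was drawn.

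Closing the argument, since $\widehat{E}(N,v,k)\subseteq C(N,v)$ and $C(N,v)$ is itself convex, we obtain
\[
\widehat{C}(N,v,k) \;=\; H\bigl(\widehat{E}(N,v,k)\bigr) \;\subseteq\; C(N,v),
\]
as required. There is no real technical obstacle here: every step reduces to standard LP theory. The only subtlety worth flagging explicitly is that boundedness of the LP optimum comes from the feasible region rather than the objective, which is what allows the algorithm to accept an arbitrary non-zero $c$ without risking an unbounded optimum.
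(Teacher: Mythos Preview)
Your proof is correct and follows essentially the same approach as the paper: a case split on whether $C(N,v)$ is empty, returning $\varnothing$ in the first case and invoking LP optimality to conclude $\widehat{E}(N,v,k)\subseteq E(N,v)$ and hence $\widehat{C}(N,v,k)\subseteq C(N,v)$ in the second. You supply the boundedness argument for the feasible polytope (via singleton coalitions plus efficiency) that the paper leaves implicit, but the overall structure is identical.
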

	
	\begin{proof}
		If $C(N,v)=\emptyset$, then Step 1 produces no feasible points, and Step 2 returns the convex hull of the empty set, i.e., $\widehat{C}(N,v,k) = H(\varnothing) = \varnothing = C(N,v)$. Conversely, if $C(N,v)\neq\varnothing$, then every feasible LP in Step 1 yields a solution $\overline{x} \in C(N,v)$. Thus, $\widehat{E}(N,v,k)\subseteq E(N,v)$ and the convex hull satisfies $\widehat{C}(N,v,k) \subseteq C(N,v)$.
	\end{proof}
	
	To guarantee coverage of the full core in the limit, observe that the set of direction vectors $c\in\left\{p/m:\, p \in \mathbb{Z},m\in \mathbb{N}\right\}^{|N|}$ is dense in $\mathbb{R}^{|N|}$. For generic choices of $c$, the solution to the LP in Step 1 corresponds to a unique vertex of the feasible region. Consequently, random sampling within this set of directions generates extreme points of the core, possibly with repetitions.
	
	\begin{proposition}
		Let $(N,v)\in G^{N}$. Then,
		\[
		E(N,v)=\lim_{k\to +\infty} \widehat{E}(N,v,k) \quad \text{ and } \quad C(N,v)=\lim_{k\to +\infty} \widehat{C}(N,v,k).
		\]
	\end{proposition}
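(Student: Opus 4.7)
The plan is to prove both equalities by first showing that the sampled vertex set exhausts the finite vertex set $E(N,v)$ in the limit, and then transferring this convergence to the convex hulls. If $C(N,v)=\varnothing$, then by the preceding proposition $\widehat{E}(N,v,k)=\varnothing=E(N,v)$ and $\widehat{C}(N,v,k)=\varnothing=C(N,v)$ for every $k$, so the claim is trivial. Assume therefore $C(N,v)\neq\varnothing$; the core is then a non-empty convex polytope with a finite vertex set $E(N,v)=\{p_1,\ldots,p_m\}$ and $C(N,v)=H(E(N,v))$.

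The main step is to exploit the normal-cone structure of the polytope. For each vertex $p_j$ I would introduce
\[
\mathcal{N}_j=\{c\in\mathbb{R}^{|N|}:\, c^\top p_j \geq c^\top x \text{ for all } x \in C(N,v)\},
\]
a convex polyhedral cone with non-empty relative interior $\mathcal{N}_j^\circ$ on which $p_j$ is the \emph{unique} maximiser of $c^\top x$ over $C(N,v)$; consequently, the LP in Step~1 returns $\overline{x}=p_j$ whenever $c\in\mathcal{N}_j^\circ$. Since each $\mathcal{N}_j^\circ$ is non-empty and open, under any sampling rule that assigns positive probability to every non-empty open cone of directions (which is exactly the content of the density remark preceding the statement), the probability $q_j$ of drawing $c\in\mathcal{N}_j^\circ$ in one iteration is strictly positive. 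By independence of successive draws and the Borel--Cantelli lemma, each event $\{c\in\mathcal{N}_j^\circ\}$ occurs infinitely often almost surely, so $p_j\in\widehat{E}(N,v,k)$ for all $k$ large enough. A union bound over the finite set $\{1,\ldots,m\}$ yields an almost surely finite $K$ with $\widehat{E}(N,v,K)=E(N,v)$; combined with the inclusion $\widehat{E}(N,v,k)\subseteq E(N,v)$ from the previous proposition and the monotonicity of $k\mapsto\widehat{E}(N,v,k)$, this gives $\lim_{k\to\infty}\widehat{E}(N,v,k)=E(N,v)$.

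The second equality follows from the first by continuity of the convex-hull operator: since $H$ is continuous on finite subsets of $\mathbb{R}^{|N|}$ with respect to the Hausdorff metric, convergence of vertex sets lifts to $\widehat{C}(N,v,k)=H(\widehat{E}(N,v,k))\to H(E(N,v))=C(N,v)$. The main obstacle I foresee concerns degenerate LPs: when the sampled $c$ lies on the boundary of the normal fan, the optimal face is higher-dimensional and an arbitrary LP optimum might fail to be a vertex. This is benign on two counts: the boundary of the fan has Lebesgue measure zero and is therefore hit with probability zero under any absolutely continuous sampling distribution, and simplex-type LP solvers always return a basic feasible solution, which is a vertex of the core; hence the inclusion $\overline{x}\in E(N,v)$ is preserved throughout and the argument goes through.
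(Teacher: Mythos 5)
Your proof is correct, but it takes a genuinely different route from the paper's. The paper's argument is deterministic: it fixes a bijection between the iteration index $k$ and the countable dense set of rational directions $\{p/m\}^{|N|}$, and then argues that, since every extreme point admits a supporting hyperplane and the set of directions selecting a given vertex is open, enumerating all rational directions eventually recovers every vertex; convexifying both sides finishes the proof. You instead keep the algorithm as actually stated (i.i.d.\ random draws of $c$) and run a probabilistic argument: the normal fan of the core assigns to each vertex $p_j$ a full-dimensional normal cone whose interior $\mathcal{N}_j^\circ$ is non-empty and open, any sampling law putting positive mass on every open cone hits each $\mathcal{N}_j^\circ$ with probability $q_j>0$, and Borel--Cantelli plus a union bound over the finitely many vertices gives an almost surely finite $K$ with $\widehat{E}(N,v,K)=E(N,v)$. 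Both proofs rest on the same geometric kernel (openness and non-emptiness of the set of directions that uniquely select a vertex), but they buy different things: the paper's enumeration device yields a clean deterministic equality at the price of silently replacing ``random sampling'' by ``systematic enumeration of rational directions,'' whereas your version proves an almost-sure statement under an explicit and checkable hypothesis on the sampling distribution, which matches the randomized scheme used in the simulations (though note it excludes the deterministic $\{-1,1\}^{|N|}$ scheme, which cannot reach vertices whose normal cones miss all sign vectors). Your explicit treatment of the degenerate directions on the boundary of the normal fan (measure zero, and simplex solvers return a basic feasible solution in any case) addresses a point the paper leaves implicit, and your final continuity step is actually stronger than needed, since the vertex sets stabilize at the finite random time $K$ and the hulls are then exactly equal.
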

	\begin{proof}
		First note that a bijection between $k \in \mathbb{N}$ and $c \in \left\{p/m:\, p \in \mathbb{Z}, m\in \mathbb{N}\right\}^{|N|}$ can be constructed. We then denote by $k_c$ the index corresponding to direction $c$. Since this set is dense in $\mathbb{R}^{|N|}$ and for each point in $E(N,v)$ a support hyperplane exist, it follows that
		\[
		E(N,v)=\bigcup_{c\in \{p/m\}^{|N|}}\widetilde{E}(N,v,k_c) = \lim_{k\to +\infty}\widehat{E}(N,v,k),
		\]
		where $\widetilde{E}(N,v,k_c)$ is the (possibly singleton) solution to the LP for direction $c$. Convexifying both sides yields the result for $C(N,v)$.
	\end{proof}
	

		
	
	\subsection{Time complexity analysis} \label{subsec Time}
	
	The main computational workload of Algorithm~\ref{Alg1} lies in Step 1, which involves solving $k$ linear programs (LPs), each with exponentially many constraints. However, thanks to the structure of the problem and the availability of efficient LP solvers, this step remains tractable in practice for moderate values of $|N|$ and $k$.
	

	Among others, interior-point methods (IPMs) provide a robust framework for solving linear programs in polynomial time. These methods, including path-following and primal-dual algorithms, have been extensively studied and are widely implemented in modern solvers. Standard complexity bounds for IPMs are of the order $O\left(s^{7/2}\log(1/\varepsilon)\right)$, where $s$ is the size of the problem and $\varepsilon$ is the desired solution accuracy \citep{wright1997primal}. These bounds apply under general assumptions and can often be improved in sparse or well-structured settings \citep{wright1997primal, nesterov1994interior}. In parallel, the classical \emph{simplex method}, despite lacking polynomial worst-case guarantees, remains competitive in practice due to advanced pivoting rules and heuristics. Recent developments in the \emph{revised simplex method}, particularly in parallel and GPU-accelerated implementations, have significantly improved its runtime performance. Moreover, both simplex and modern IPM-based solvers support \emph{warm-starts}, allowing successive LPs that only differ in their objective vector to be solved extremely efficiently. This feature is particularly relevant in iterative schemes such as ours, where the feasible region remains fixed while only the cost coefficients change. For a comprehensive discussion, see \citep{fearnley2015thecomplexity, bieling2010aneffitient, huangfu2018parallelizing, lubin2013parallel, sentelle2011efficient}.

	Step 2 of the algorithm computes the convex hull of at most $k$ points in $\mathbb{R}^n$. The time complexity of this operation is known to be $O\left(k\log k + k^{\lfloor n/2 \rfloor}\right)$ in the worst case \citep{chazelle1993optimal}, and therefore is also polynomial in both $k$ and $n$ for fixed dimensions. In sum, for a given TU-game $(N,v)$, the entire algorithm is polynomial in the number of samples, as stated below.
	
	\begin{proposition}
		Let $(N,v)\in G^{N}$ and fix $k \geq 1$. Then, $\widehat{C}(N,v,k)$ can be computed in polynomial time by Algorithm~\ref{Alg1}.
	\end{proposition}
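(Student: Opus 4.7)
The plan is to decompose the runtime into the costs of Steps~1 and~2 of Algorithm~\ref{Alg1}, bound each by a polynomial, and sum them. Writing $n=|N|$ and letting $s$ denote the bit-length needed to encode $(N,v)$, I would first observe that $s=\Omega(2^{n})$ because $v$ assigns a value to each of the $2^{n}-1$ non-empty coalitions, and therefore $n\le\log_{2}(s+1)$. Fixing this complexity measure at the outset is important: the LP in Step~1 has exponentially many constraints in $n$, so ``polynomial time'' here must be read as polynomial in $s$ and $k$, not in $n$ alone.

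For Step~1, I would argue that each of the $k$ iterations consists of (i) drawing a non-zero random direction $c\in\mathbb{R}^{n}$, which costs $O(n)$, and (ii) solving an LP with $n$ variables, one equality constraint and $2^{n}-2$ inequality constraints. The total encoding length of this LP is polynomial in $s$, so invoking any polynomial-time LP algorithm, such as the interior-point method recalled in Section~\ref{subsec Time} with cost $O(s^{7/2}\log(1/\varepsilon))$, solves each instance in time polynomial in $s$. Multiplying by the $k$ iterations yields a bound for Step~1 that is polynomial in both $k$ and $s$.

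For Step~2, I would simply invoke Remark~\ref{rem conv complexity}: the convex hull of the at most $k$ points collected in $\widehat{E}(N,v,k)\subseteq\mathbb{R}^{n}$ is computable in time $O(k\log k + k^{\lfloor n/2\rfloor})$. Since $n\le\log_{2}(s+1)$, this expression is also polynomial in $s$ (and trivially in $k$). Adding the bounds from Steps~1 and~2 then gives the total runtime of Algorithm~\ref{Alg1}, which is polynomial in $s$ and $k$, proving the claim.

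The main obstacle is not analytic but conceptual: making the complexity measure precise. Because the input already contains $\Theta(2^{n})$ numbers, measuring runtime in $n$ alone would make even reading the input super-polynomial; the natural and standard choice is to express both the LP cost and the convex-hull cost in terms of the input bit-length $s$ and the sample parameter $k$. Once this convention is in place, the two ingredients are off-the-shelf (a polynomial-time LP solver for Step~1 and Chazelle's convex-hull algorithm for Step~2), and the rest of the argument is only a matter of adding the resulting polynomial bounds.
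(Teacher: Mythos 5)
Your decomposition mirrors the paper's own proof: Step 1 is handled by invoking a polynomial-time LP solver for each of the $k$ programs, and Step 2 by the convex-hull bound of Remark~\ref{rem conv complexity}. Your explicit choice of complexity measure---the encoding length $s=\Omega(2^{n})$ of the characteristic function---is in fact more careful than the paper, which speaks only of ``the problem size'' and of polynomiality ``for fixed $n$''; measuring against $s$ is precisely what makes Step 1 genuinely polynomial despite the $2^{n}-2$ coalition constraints. The one place you overshoot is the final claim of joint polynomiality in $s$ and $k$: the hull term $k^{\lfloor n/2\rfloor}$ is of order $k^{O(\log s)}=s^{O(\log k)}$, which is quasi-polynomial when $k$ and $s$ grow together (taking $k=s$ gives $s^{\Theta(\log s)}$), so no single polynomial in $(s,k)$ bounds it. You should retreat to the weaker---but sufficient---statement that the runtime is polynomial in $s$ for each fixed $k$ (exactly the hypothesis of the proposition), or, matching the paper's phrasing, polynomial in $k$ for fixed dimension $n$. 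With that correction your argument is sound and coincides in substance with the paper's.
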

	
	\begin{proof}
		Step 1 consists of $k$ iterations, each requiring the solution of a linear program. Using polynomial-time LP solvers such as interior-point methods, this step has overall complexity polynomial in $k$ and the problem size. Step 2 involves computing the convex hull of at most $k$ vectors in $\mathbb{R}^n$, which also admits polynomial-time algorithms for fixed $n$, as noted in Remark~\ref{rem conv complexity}. Hence, the total runtime is polynomial in both $k$ and $n$.
	\end{proof}

	\subsection{Error measurements}\label{sec:errors}
	
	To evaluate the quality of the approximation $\widehat{C}(N,v,k)$ to the true core $C(N,v)$, we introduce three quantitative performance metrics: the \emph{Extreme Points Ratio} (EPR), the \emph{Volume Ratio} (VR), and the \emph{Relative Distance to Centroid} (RDC). These indicators capture discrete and continuous aspects of the core estimate, respectively.
	
	\begin{itemize}
		\item \textbf{Extreme Points Ratio (EPR):} This metric is defined as the proportion of true extreme points of the core that are recovered by the algorithm, i.e.,
		\[
		\text{EPR} = \frac{|\widehat{E}(N,v,k)|}{|E(N,v)|}.
		\]
		It provides a combinatorial measure of the estimator’s coverage over the extreme point set of the core.
		
		\item \textbf{Volume Ratio (VR):} This metric compares the $n$-dimensional volume of the approximated core $\widehat{C}(N,v,k)$ with that of the true core $C(N,v)$:
		\[
		\text{VR} = \frac{\text{Vol}(\widehat{C}(N,v,k))}{\text{Vol}(C(N,v))}.
		\]
		It reflects how well the shape and size of the original core are reconstructed by the sampled estimator.
	\end{itemize}
	
	Beyond evaluating the proportion of extreme points obtained and the relative volume of the estimated core,  it is also important to assess how well the approximation preserves the centrality properties of the solution set. Since many allocation rules and fairness notions are related to central positions within the core, measuring the proximity between the estimated and the true core-centroid provides additional insight into the quality of the proposal approximation. To this end, we introduce the following metric:
	\begin{itemize}
		\item \textbf{Relative Distance to Centroid (RDC):} This metric compares the sum of distances from the true extreme points to the true core-centroid $CC(N,v)$ with respect to the approximated core-centroid $\widehat{CC}(N,v)$. First, the Average Distance to Centroid (ADC) is computed as:
		\[
		\text{ADC} =  \frac{\displaystyle\sum_{e \in E(N,v)}\left(dist(e,\widehat{CC}(N,v))-dist(e,CC(N,v))\right)}{\displaystyle\sum_{e \in E(N,v)}dist(e,CC(N,v))}.
		\]
		In parallel, the worst case scenario for this distance (WDC) is calculated as follows:
		\[
		\text{WDC}=\frac{\displaystyle\max_{\tilde{e} \in E(N,v)}\left(\displaystyle\sum_{e \in E(N,v)}dist(e,\tilde{e})\right)-\displaystyle\sum_{e \in E(N,v)}dist(e,CC(N,v))}{\displaystyle\sum_{e \in E(N,v)}dist(e,CC(N,v))}.
		\]
		Finally, both ratios are compared in such a way that a proportion is obtained: 
		\[
		\text{RDC}=\text{ADC}/\text{WDC}=\frac{\displaystyle\sum_{e \in E(N,v)}dist(e,\widehat{CC}(N,v))-\displaystyle\sum_{e \in E(N,v)}dist(e,CC(N,v))}{\displaystyle\max_{\tilde{e} \in E(N,v)}\left(\displaystyle\sum_{e \in E(N,v)}dist(e,\tilde{e})\right)-\displaystyle\sum_{e \in E(N,v)}dist(e,CC(N,v))}.
		\]
		Indeed, it constitutes an error-type measurement for how accurately the approximated centroid's position reflects the true geometric centroid of the core.
	\end{itemize}
	
	While EPR and RDC require full knowledge of the true extreme point set $E(N,v)$ and are therefore only applicable in low-dimensional cases, VR can be computed efficiently for moderate dimensions using standard convex geometry libraries. In practice, we use EPR and VR jointly for small games to validate the algorithm’s convergence behavior as $k$ increases, whereas for larger games, where full enumeration becomes computationally infeasible, VR serves as the primary proxy for approximation quality. These metrics jointly provide insight into the trade-off between computational effort and accuracy. In particular, empirical results in Section~\ref{sec:simulations} demonstrate that a relatively small number of iterations can suffice to achieve high volume coverage in many settings. Moreover, this finding is corroborated by analyzing the position of the core-centroid, evaluated through RDC.

	\section{Simulations} \label{sec:simulations}
	
	This section presents a series of numerical experiments designed to evaluate the performance of the proposed estimation algorithm for the core of TU games. Specifically, our objective is to assess its convergence behavior, accuracy, and computational efficiency across a variety of representative game models, ranging from low-dimensional exact cases to larger-scale settings. The evaluation is based on the metrics introduced in Subsection \ref{sec:errors}: the Extreme Points Ratio (EPR), the Volume Ratio (VR), and the Relative Distance to Centroid (RDC). These indicators allow us to quantify both the discrete and geometric quality of the approximation $\widehat{C}(N,v,k)$ as a function of the number of samples $k$. To this end, simulations are conducted for increasing values of $k$ and for different coalition structures, including convex and non-convex games. We include both deterministic and randomized sampling schemes for the direction vectors $c \in \mathbb{R}^{|N|}$ in order to compare their respective performances. In the deterministic case, samples are taken uniformly from the discrete set of sign vectors $\{-1,1\}^{|N|}$, which corresponds to probing the extreme coordinate directions. In contrast, the randomized scheme draws samples uniformly from the unit ball, thereby exploring a wider range of orientations. Finally, in order to ensure transparency and reproducibility of the reported results, we summarize the hardware and software environment in which the experiments were conducted.

	
	\vspace{1em}
	\noindent\textbf{Computational Environment} \\
	Experiments were carried out on a machine with the following specifications:
	\begin{itemize}
		\item \textbf{Processor:} i7-13700 2.1 GHz
		\item \textbf{Memory:} 16 GB
		\item \textbf{Operating System:} Windows 11
		\item \textbf{Python Version:} 3.12
	\end{itemize}
	
	\vspace{0.5em}
	\noindent\textbf{Libraries:}
	\begin{itemize}
		\item \textbf{HiGHS:} LP solver via \texttt{scipy.optimize.linprog} (method='highs')
		\item \textbf{Geometry:} Convex hull computation via \texttt{scipy.spatial.ConvexHull}
	\end{itemize}
	
	In the following subsections, we describe the simulation setup in detail and present results for three benchmark models: a convex saving game, a non-convex allocation game, and the game of the museum pass problem. Each model highlights different geometric and combinatorial challenges in approximating the core.

	\subsection{Simulation framework}
	
	To empirically evaluate the performance of our core approximation algorithm, we design a set of benchmark scenarios that are both structurally diverse and scalable. In particular, we consider three representative families of TU-games, each exhibiting different geometric and combinatorial characteristics of their cores. These models are chosen to reflect common challenges in core computation and to test the algorithm under varied convexity, dimensionality, and size of the extreme point set $E(N,v)$.
	
	The selected scenarios fall into the following categories:
	\begin{itemize}
		\item A \textbf{convex game with a large number of extreme points}, which tests the algorithm’s capacity to approximate rich core geometries;
		\item A \textbf{non-convex game with a small core}, which challenges the algorithm’s ability to detect sparsely located vertices;
		\item A \textbf{structured convex game with moderate-sized core}, to evaluate performance under realistic and interpretable configurations.
	\end{itemize}
	
	These models are inspired by and adapted from the framework proposed in \cite{saavedra2024ontheestimation}, allowing for consistency in evaluation and comparability of results.
	
	\vspace{1em}
	\noindent\textbf{Model 1.} Consider a saving game $(N,v)$ defined as follows. For each $T\subseteq N$ connected by an initial ordering $\sigma_0$ of players, 
	\begin{equation*}
		v(T)=\sum_{(i,j)\in MP,\; i,j\in T}(\alpha_j p_i-\alpha_i p_j),
	\end{equation*}
	where $MP=\{(i,j)\in N\times N\;|\; \alpha_j p_i-\alpha_i p_j>0\}$. If $T$ is disconnected with respect to $\sigma_0$, $v(T)$ is defined as the sum over its connected components. For consistency with prior work, we use the same parameters as in \cite{saavedra2024ontheestimation}: $|N|=8$,  $\sigma_0=(1,2,3,4,5,6,7,8)$, $p=(3,4,6,1,3,4,5,4)$ and $\alpha=(1,2,4,2,5,2,1,4)$. A reduced version for $|N|=6$ is obtained by restricting to the first six components of each vector.
	
	\vspace{0.5em}
	\noindent\textbf{Model 2.} Consider the non-convex TU-game $(N,v)$ given by
	\begin{equation*}
		v(T)=
		\left\{
		\begin{array}{ll}
			0 & \text{ if } T=\{i\},\\
			\frac{3}{4}\frac{|T|}{|N|} & \text{ if } |N|\in T,\\
			\frac{|T|}{|N|} & \text{ otherwise}.
		\end{array}
		\right.
	\end{equation*}
	This formulation introduces asymmetry and non-convexity into the core structure. As shown in \cite{saavedra2024ontheestimation}, the model scales well with $|N|$ and allows controlled testing. While the authors use up to 10 players, we extend the setting to $|N|=13$ to validate the algorithm’s robustness in higher dimensions.
	
	\vspace{0.5em}
	\noindent\textbf{Model 3.} Consider the museum game $(N,v)$, where $N$ denotes the set of museums. The game is defined by a set $M$ of visitors and a binary matrix $A \in \{0,1\}^{|M|\times |N|}$, where $a_{ij}=1$ indicates that the visitor $i \in M$ visits the museum $j \in N$.
	The value of a coalition $T\subseteq N$ of museums is defined as the number of visitors who visited only the museums in $T$:
	\begin{equation*}
		v(T)=|\lbrace i\in M: N_i\subseteq T\rbrace|,
	\end{equation*}
	where $N_i=\{j \in N: a_{ij}=1\}$ for all $i\in M$. Choosing an appropriate number of museums $|N|$ and designing $A$ allows us to simulate different game complexities and core structures. In our experiments, we fix $|N|=11$ and use the matrix
	\begin{equation*}
		A=
		\left(
		\begin{array}{ccccccccccc}
			1 & 0 & 0 & 1 & 0 & 1 & 0 & 1 & 0 & 1 & 0 \\ 
			0 & 1 & 0 & 0 & 1 & 1 & 0 & 0 & 1 & 0 & 0 \\ 
			1 & 1 & 1 & 0 & 0 & 0 & 0 & 1 & 1 & 1 & 0 \\ 
			0 & 0 & 1 & 1 & 0 & 0 & 1 & 0 & 0 & 0 & 1 \\ 
			1 & 0 & 0 & 0 & 1 & 0 & 1 & 0 & 1 & 0 & 0 
		\end{array}
		\right).
	\end{equation*}
	
	To simulate different player counts ($|N| \in \{8,9,10\}$), we adapt the matrix $A$ by truncating rows or columns as necessary. This model provides a versatile testbed for core approximation under realistic, structured input data.

	\subsection{Simulation results}
	
	Each simulation was executed for different combinations of number of players $|N|$, sampling size $k$, and vector generation method (deterministic or random). For each setting, 100 independent runs were performed and the reported results correspond to average values. The three models introduced in the previous subsection provide complementary insights into the performance of the proposed algorithm.
	
	Tables~\ref{Table 1}--\ref{Table 3} report the results in terms of: 
	\textbf{(i)} EPR (Extreme Points Ratio), 
	\textbf{(ii)} VR (Volume Ratio), \textbf{(iii)} RDC (Relative Distance to Centroid) and 
	\textbf{(iv)} computation time of Step 1, which corresponds to solving the $k$ linear problems.  Note that Step 2, computing the convex hull, is always necessary regardless of the method used to generate the approximating set, which is why we are not taking these times into account in these results.
	
	In Table~\ref{Table 1} (Model 1), we observe that both deterministic (D) and random (R) direction generation perform well for small dimensions ($|N| = 6$), achieving over $95\%$ volume coverage with $k = 500$. For $|N| = 8$, the number of extreme points becomes substantially larger ($|E(N,v)| = 1405$), and convergence slows accordingly. Nonetheless, a VR above $97\%$ is achieved with $k=1000$ and only $2.12$ seconds, indicating that the approximation remains effective in richer geometries.
	
	\begin{table}[h!]
		\centering
		\caption{Model 1. Savings game}
		\label{Table 1}
		{\setlength{\tabcolsep}{1.0em}
			\begin{tabular}{ccccccc}
				\toprule
				$|N|$ & G & $k$ & EPR & VR & RDC & Time \\
				\midrule
				\multirow{6}{*}{6} & \multirow{3}{*}{D} & 100 & 50.26/127.0 & 0.7677 & 0.0642 & $\sim 0.0$ \\
				&  & 250 & 73.24/127.0 & 0.9073 & 0.0405 & $\sim 0.0$ \\
				&  & 500 & 87.17/127.0 & 0.9544 & 0.0274 & $\sim 0.0$ \\
				\cline{2-7}
				& \multirow{3}{*}{R} & 100 & 58.43/127.0 & 0.9044 & 0.0620 & $\sim 0.0$ \\
				&  & 250 & 88.25/127.0 & 0.9806 & 0.0345 & $\sim 0.0$ \\
				&  & 500 & 106.18/127.0 & 0.9947 & 0.0177 & 0.51 \\
				\hline
				\multirow{8}{*}{8} & \multirow{4}{*}{D} & 100 & 89.03/1405.0 & 0.6166 & 0.0655 & $\sim 0.0$ \\
				&  & 250 & 191.81/1405.0 & 0.8360 & 0.0576 & 0.06 \\
				&  & 500 & 312.91/1405.0 & 0.9230 & 0.0485 & 1.0 \\
				&  & 1000 & 466.4/1405.0 & 0.9654 & 0.0400 & 2.09 \\
				\cline{2-7}
				& \multirow{4}{*}{R} & 100 & 92.24/1405.0 & 0.6429 & 0.0893 & $\sim 0.0$ \\
				&  & 250 & 205.82/1405.0 & 0.8657 & 0.0791 & 0.06 \\
				&  & 500 & 349.98/1405.0 & 0.9415 & 0.0651 & 1.05 \\
				&  & 1000 & 543.29/1405.0 & 0.9747 & 0.0526 & 2.12 \\
				\bottomrule
		\end{tabular}}
	\end{table}

	In Model~2, since the deterministic generation of vectors $c$ was observed to be more efficient (see Table~\ref{Table 2}), 
	the simulation for the case $|N|=13$ was carried out only using this approach. The structure of $E(N,v)$ is important in order to extract conclusions from it. We have that $|E(N,v)|=2|N|$, which can be considered small compared to the rest of the models. Because of the positioning of those extreme points, the algorithm finds $|N|$ of them after very few iterations. For instance, up to $|N|=13$, just $k=500$ is enough. As a consequence, recovering over $50\%$ of the volume ratio is not a time-consuming task. On the other hand, aiming to $95\%$ or more requires $\times 200$ that number of iterations. Thus, the improvement in reconstructed volume compared to the time needed might not be worth it.
	
	\begin{table}[h!]
		\centering
		\caption{Model 2. Non-convex}
		\label{Table 2}
		{\setlength{\tabcolsep}{1.0em}
			\begin{tabular}{lllllll}
				\toprule
				$|N|$ & G & $k$ & EPR & VR & RDC & Time \\
				\midrule
				\multirow{8}{*}{10} & \multirow{4}{*}{D} & 1000 & 13.09/20.0 & 0.6161 & 0.0014 & 5.97 \\
				&  & 2500 & 15.72/20.0 & 0.7622 & 0.0013 & 16.08 \\
				&  & 5000 & 18.07/20.0 & 0.8926 & 0.0005 & 32.63 \\
				&  & 10000 & 19.49/20.0 & 0.9719 & 0.0001 & 68.7 \\
				\cline{2-7}
				& \multirow{4}{*}{R} & 1000 & 12.22/20.0 & 0.5677 & 0.0014 & 7.55 \\
				&  & 2500 & 14.54/20.0 & 0.6967 & 0.0012 & 20.81 \\
				&  & 5000 & 17.04/20.0 & 0.8357 & 0.0010 & 41.63 \\
				&  & 10000 & 19.34/20.0 & 0.9632 & 0.0003 & 88.61 \\
				\hline
				\multirow{8}{*}{11} & \multirow{4}{*}{D} & 1000 & 12.76/22.0 & 0.5939 & 0.0197 & 11.66 \\
				&  & 2500 & 15.12/22.0 & 0.7117 & 0.0190 & 27.5 \\
				&  & 5000 & 17.26/22.0 & 0.8277 & 0.0188 & 59.17 \\
				&  & 10000 & 20.51/22.0 & 0.987 & 0.0162 & 120.83 \\
				\cline{2-7}
				& \multirow{4}{*}{R} & 1000 & 11.73/22.0 & 0.5406 & 0.0131 & 15.63 \\
				&  & 2500 & 12.41/22.0 & 0.5783 & 0.0128 & 36.41 \\
				&  & 5000 & 14.03/22.0 & 0.6685 & 0.0124 & 78.93 \\
				&  & 10000 & 15.94/22.0 & 0.7743 & 0.0123 & 159.92 \\
				\hline
				\multirow{10}{*}{12} & \multirow{5}{*}{D} & 1000 & 12.89/24.0 & 0.5444 & 0.0136 & 19.97 \\
				&  & 2500 & 14.39/24.0 & 0.6176 & 0.0132 & 59.45 \\
				&  & 5000 & 16.34/24.0 & 0.713 & 0.0130 & 119.42 \\
				&  & 10000 & 18.79/24.0 & 0.8319 & 0.0114 & 445.66 \\
				&  & 25000 & 22.33/24.0 & 0.9767 & 0.0109 & 2019.13 \\
				\cline{2-7}
				& \multirow{5}{*}{R} & 1000 & 12.14/24.0 & 0.507 & 0.0126 & 27.0 \\
				&  & 2500 & 12.59/24.0 & 0.5294 & 0.0114 & 81.73 \\
				&  & 5000 & 12.78/24.0 & 0.539 & 0.0112 & 161.94 \\
				&  & 10000 & 13.98/24.0 & 0.5989 & 0.0111 & 613.72 \\
				&  & 25000 & 16.33/24.0 & 0.7167 & 0.0108 & 2807.2 \\
				\hline
				\multirow{5}{*}{13} & \multirow{5}{*}{D} & 500 & 13.52/26.0 & 0.51 & 0.0385 & 28.94 \\
				&  & 2500 & 15.06/26.0 & 0.5364 & 0.0300 & 124.36 \\
				&  & 5000 & 17.48/26.0 & 0.5693 & 0.0271 & 296.38 \\
				&  & 10000 & 19.53/26.0 & 0.6288 & 0.0251 & 854.73 \\
				&  & 25000 & 23.3/26.0 & 0.7636 & 0.0236 & 4300.5 \\
				\bottomrule
		\end{tabular}}
	\end{table}
	
	Finally, Table~\ref{Table 3} (Model 3) again explores a convex TU-game.  With as few as $k = 250$ the algorithm already reconstructs more than $83\,\%$ of the core volume for every size, and at $k = 1\,000$ it consistently surpasses $97\,\%$, even when the exact core contains $683$ vertices. The extreme‐points ratio rises more slowly than the volume ratio: for $|N| = 11$ the estimator recovers roughly half of the vertices yet still encloses $94\,\%$ of the true volume.  The fully random generator yields marginally higher VR and EPR than the deterministic variant, but incurs a $10$–$20\,\%$ increase in Step~1 runtime. Runtimes remain modest: the most demanding configuration ($|N| = 11$, $k = 1\,000$) finishes Step~1 in about six seconds, far faster than a naive implementation enumerating the entire vertex set of this convex game, which takes around ten hours.
	
	\begin{table}[h!]
		\centering
		\caption{Model 3. Museum}
		\label{Table 3}
		{\setlength{\tabcolsep}{1.0em}
			\begin{tabular}{ccccccc}
				\toprule
				$|N|$ & G & $k$ & EPR & VR & RDC & Time \\
				\midrule
				\multirow{8}{*}{8} & \multirow{4}{*}{D} & 100 & 63.66/220.0 & 0.6605 & 0.0253 & $\sim 0.0$ \\
				&  & 250 & 105.81/220.0 & 0.8769 & 0.0126 & 0.37 \\
				&  & 500 & 139.02/220.0 & 0.954 & 0.0068 & 1.0 \\
				&  & 1000 & 170.34/220.0 & 0.9849 & 0.0029 & 2.0 \\
				\cline{2-7}
				& \multirow{4}{*}{R} & 100 & 69.4/220.0 & 0.7628 & 0.0133 & $\sim 0.0$ \\
				&  & 250 & 118.98/220.0 & 0.9438 & 0.0062 & 0.5 \\
				&  & 500 & 157.89/220.0 & 0.9857 & 0.0032 & 1.0 \\
				&  & 1000 & 190.79/220.0 & 0.9974 & 0.0011 & 2.0 \\
				\hline
				\multirow{8}{*}{9} & \multirow{4}{*}{D} & 100 & 68.04/341.0 & 0.5605 & 0.0265 & $\sim 0.0$ \\
				&  & 250 & 123.66/341.0 & 0.8477 & 0.0149 & 1.0 \\
				&  & 500 & 174.98/341.0 & 0.9432 & 0.0091 & 1.21 \\
				&  & 1000 & 230.27/341.0 & 0.9826 & 0.0042 & 2.66 \\
				\cline{2-7}
				& \multirow{4}{*}{R} & 100 & 75.97/341.0 & 0.7096 & 0.0177 & $\sim 0.0$ \\
				&  & 250 & 140.35/341.0 & 0.9266 & 0.0102 & 1.0 \\
				&  & 500 & 198.41/341.0 & 0.9792 & 0.0051 & 1.53 \\
				&  & 1000 & 254.39/341.0 & 0.9953 & 0.0022 & 3.49 \\
				\hline
				\multirow{8}{*}{10} & \multirow{4}{*}{D} & 100 & 75.9/461.0 & 0.5216 & 0.0209 & 0.4 \\
				&  & 250 & 144.18/461.0 & 0.8311 & 0.0129 & 1.0 \\
				&  & 500 & 213.71/461.0 & 0.9437 & 0.0084 & 2.05 \\
				&  & 1000 & 285.38/461.0 & 0.9834 & 0.0049 & 3.51 \\
				\cline{2-7}
				& \multirow{4}{*}{R} & 100 & 79.47/461.0 & 0.6652 & 0.0179 & 0.5 \\
				&  & 250 & 154.74/461.0 & 0.9164 & 0.01 & 1.0 \\
				&  & 500 & 229.19/461.0 & 0.9764 & 0.0065 & 2.49 \\
				&  & 1000 & 308.75/461.0 & 0.9949 & 0.0033 & 4.47 \\
				\hline
				\multirow{8}{*}{11} & \multirow{4}{*}{D} & 100 & 77.75/683.0 & 0.3155 & 0.0361 & 1.0 \\
				&  & 250 & 153.93/683.0 & 0.6599 & 0.0231 & 1.27 \\
				&  & 500 & 235.6/683.0 & 0.8318 & 0.0158 & 3.11 \\
				&  & 1000 & 334.67/683.0 & 0.937 & 0.009 & 5.48 \\
				\cline{2-7}
				& \multirow{4}{*}{R} & 100 & 83.54/683.0 & 0.5279 & 0.0182 & 1.0 \\
				&  & 250 & 168.96/683.0 & 0.832 & 0.0107 & 1.39 \\
				&  & 500 & 260.75/683.0 & 0.9335 & 0.0072 & 3.22 \\
				&  & 1000 & 365.82/683.0 & 0.9754 & 0.0042 & 6.0 \\
				\bottomrule
		\end{tabular}}
	\end{table}
	
	Along Tables \ref{Table 1}-\ref{Table 3} we observe that the error displayed by RDC ranges from almost $9\,\%$ to virtually $0\,\%$. As it occurs with EPR and VR, the increase of $k$ samples translates into better performance in the approximation of the core-centroid. However, this improved accuracy entails a significant computational cost, highlighting a trade-off between precision and efficiency.
	
	In summary, simulation results confirm that the proposed algorithm provides accurate approximations of the core in a scalable manner. Deterministic direction sampling tends to perform better across models, and the reduction in the number of extreme points translates directly into reduced computation time in Step 2, while preserving almost the entire volume of the core. The method is particularly advantageous in high-dimensional or dense-core settings, where exact methods are either impractical or infeasible.

	\section{Conclusion}\label{sec:conc}
	
	We have presented a \emph{repeated linear-programming} framework that approximates the core of TU games by iteratively solving LPs with carefully selected objective vectors and then convexifying the extreme allocations returned.  In contrast to Monte-Carlo relaxations such as the probable-approximate core of \citet{yan2022probable} and the constraint-sampling scheme of \citet{gemp2024approximate}, our method never violates coalition rationality: every vertex generated is a ``bona-fide'' core allocation.  At the same time, it preserves the geometric insight offered by the set-estimation approach of \citet{saavedra2024ontheestimation} while avoiding the costly hit-and-run walks those authors require.  Extensive experiments on savings and museum games with up to $13$ players show that it takes only a few seconds to recover more than 95\% of the core’s volume and to capture over 98\% of its extreme points, all in polynomial time in the input size.  Deterministic objective schedules consistently dominate random ones, suggesting that directional design is a fruitful lever for further optimization.
	
	Beyond raw performance, the proposed estimator enjoys several conceptual advantages.  First, it is deterministic: given the same sequence of objectives it yields identical core polytopes, an asset in applications, such as cost-sharing contracts, where replicability matters.  Second, by returning a collection of vertices rather than a single payoff, it supports downstream analysis of core structure (e.g., selecting nucleolus or least‐core points) that \emph{point-wise samplers} such as the probable-approximate core \citep{yan2022probable} or sampling-based Shapley estimators \citep{maleki2013bounding} cannot easily provide.

	\bibliographystyle{apalike}
	\bibliography{bibliography}

\end{document}